\newcommand{\etc}{\textit{etc}}
\newcommand{\eg}{e.g., }
\newcommand{\ie}{i.e., }
\newcommand{\cf}{cf.\ }
\newcommand{\edges}{\ensuremath{\mathcal{E}}} 
\newcommand{\edge}{\ensuremath{{\boldsymbol{\mathrm{e}}}}} 
\newcommand{\w}{\ensuremath{w}}
\newcommand{\tmat}{\ensuremath{\mathbb W}} 
\newcommand{\mot}{\ensuremath{B}} 
\newcommand*\df{\mathop{}\!\mathrm{d}}
\renewcommand{\bar}{\ensuremath{\overline}} 
\newcommand{\audretsch}[1]{\ldots} 
\renewcommand{\vec}[1]{\ensuremath{\mathbf{\boldsymbol{#1}}}} 
\newcommand{\bvec}[1]{\ensuremath{\mathbf{\boldsymbol{#1}}}} 
\newcommand{\eps}{\epsilon} 
\renewcommand{\epsilon}{\varepsilon} 
\newcommand{\kb}{\ensuremath{k_\mathrm{B}}}
\newcommand{\diffu}{\ensuremath{D}}
\newcommand{\inc}{\ensuremath{d}}
\newcommand{\x}{\ensuremath{x}}
\newcommand{\xs}{\ensuremath{\bvec{x}}}
\newtheorem{theorem}{Theorem}
\newtheorem{corr}{Corollary}
\begin{document}
\title{Fluctuation-dissipation relations far from equilibrium} 
\author{Bernhard Altaner}
\author{Matteo Polettini}
\author{Massimiliano Esposito}
\affiliation{Complex Systems and Statistical Mechanics, Physics and Materials Science Research Unit, University of Luxembourg, Luxembourg}

\begin{abstract}
  Near equilibrium, where all currents of a system vanish on average, the fluctuation-dissipation relation (FDR) connects a current's spontaneous fluctuations with its response to perturbations of the conjugate thermodynamic force.
  Out of equilibrium, fluctuation-response relations generally involve additional nondissipative contributions.
  Here, in the framework of stochastic thermodynamics, we show that an equilibrium-like FDR holds for internally equilibrated currents, if the perturbing conjugate force only affects the microscopic transitions that contribute to the current.
  We discuss the physical requirements for the validity of our result and apply it to nano-sized electronic devices.
  \end{abstract}
  \keywords{nonequilibrium \& irreversible thermodynamics; fluctuation-dissipation theorems}

  \maketitle

  According to statistical mechanics, systems at equilibrium enjoy a special property:
  it is impossible to tell their spontaneous fluctuations from their response to small external perturbations.
  This message lies at the heart of so-called fluctuation-dissipation relations (FDR)~\cite{Callen.Welton1951,Green1954,Kubo1957,Marconi.etal2008}.
  However, most complex systems live out of equilibrium.
  Equilibrium-like conditions can only be reproduced artificially in localized patches, whereby some particular current stalls in the presence of other currents which are sustained by nonequilibrium driving forces.
  Then, does the FDR hold for such stalled currents?
  Or, in reverse: Is the validity of the FDR a genuine hallmark for equilibrium systems?

  Nonequilibrium statistical mechanics provides the modern toolbox to tackle such questions.
  In particular, the framework of stochastic thermodynamics gives a thermodynamic description of small systems subject to fluctuations~\cite{Sekimoto1998,Jarzynski2011,Seifert2012,VandenBroeck.Esposito2015,Qian2005}, with applications to interdisciplinary areas including nanoscopic electronics \cite{Strasberg.etal2013,Cuetara.Esposito2015,Fujisawa.etal2006,Koski.etal2015,Sanchez.etal2013}, complex bio-molecules such as molecular motors~\cite{Liepelt.Lipowsky2007,Lau.etal2007,Gaspard.Gerritsma2007,Seifert2011,Golubeva.Imparato2012,Altaner.etal2015}, and chemical reaction networks \cite{Gaspard2004,Schmiedl.Seifert2007,Ge.etal2012,Polettini.etal2015a}.
  In this framework, the stochastic observables of experimental interest are the time-averaged thermodynamic currents $\frac{1}{\tau}  \Phi_\alpha^{(\tau)}$, where $\Phi_\alpha^{\tau} := \int \jmath_\alpha(t) \df t$ is the time-integral over an instantaneous fluctuating current $\jmath_\alpha(t)$ (\eg of matter, heat, charge etc.).
  Due to the limited accuracy of measurements, often only the first two cumulants of a current's steady-state statistics are accessible:
  The expected behavior is expressed by the average value $J_\alpha$, whereas fluctuations are characterized by a generalized diffusion constant $D_{\alpha,\alpha}$, obtained from the scaling of the generalized mean square displacement $\langle \Phi_\alpha^{(\tau)}  \Phi_\alpha^{(\tau)}\rangle$ with time $\tau$.
  Above and in what follows, $\langle\, \cdot\, \rangle$ denotes an average over stochastic trajectories sampled from a stationary ensemble.
  According to phenomenological thermodynamics, the steady state dissipation rate \mbox{$\kb \sum_\alpha J_\alpha h_\alpha$}, is a bi-linear form of physical currents $J_\alpha$ and their conjugate thermodynamic forces $h_\alpha$ (\eg gradients in chemical potential or temperature, electrical fields \textit{etc.}).
  In the following, we work with dimensionless quantities and set Boltzmann's constant $\kb$ to unity.
  The FDR connects the dissipative response of one current $J_\alpha$ (\ie the response with respect to a variation of its conjugate force $h_\alpha$) with its spontaneous fluctuations:
  \begin{align}
    \partial_{h_\alpha} J_\alpha (\xs^{\text{eq}})=\diffu_{\alpha,\alpha}(\xs^{\text{eq}}).
    \label{eq:equilibriumFDR}
  \end{align}
  Notice that we introduced an explicit dependence on a vector $\xs$ of arbitrary parameters characterizing the system and its environment.
  In what follows, we consider the force $h_\alpha$ as an independent parameter, such that $\xs = \xs(h_\alpha)$. 
  The FDR~\eqref{eq:equilibriumFDR} requires that the physical parameters identify an equilibrium system, \ie at $\xs = \xs^{\text{eq}}$ all thermodynamic forces and thus all currents
  vanish.

  The question whether and how a result analogous to Eq.~\eqref{eq:equilibriumFDR} can be extended to nonequilibrium situations has attracted considerable attention~\cite{Blickle.etal2007,Chetrite.etal2008,Andrieux.Gaspard2007a,Baiesi.etal2009,Gomez-Solano.etal2009,Prost.etal2009,Seifert2010,Seifert.Speck2010,Verley.etal2011}. 
  The general understanding is that the FDR has to be modified in nonequilibrium situations by considering additional correlations with a time-symmetric quantity, often called activity~\cite{Baiesi.etal2009,Baerts.etal2013}.
  Hence, the usual nonequilibrium extensions of fluctuation-dissipation relations are rather fluctuation-dissipation-activity relations than true FDRs.
  Moreover, most of the above cited results are either formal or formulated in the context of specific setups.
  To our best knowledge \footnote{
    Like in the present case, the abstract formula stated in Ref.~\cite{Seifert2010} considers the response of currents to thermodynamically conjugate forces.
    However, we have found there is a mistake in the proof of this paper and identified cases where this formula gives wrong results.
  },
  a physical picture has only been obtained for conservative perturbations, where stochastic transition rates between two states are modified anti-symmetrically by the addition of a potential $V$, rather than changing a nonconservative driving force~\cite{Baiesi.Maes2013}.%

  In this Letter, for the first time, we present clear conditions for the validity of a true FDR in situations far from equilibrium.
  Our main result states that given a force $h_\alpha$, which couples only to those transitions that contribute to the conjugate current $J_\alpha$, we find a nonequilibrium fluctuation-response relation which takes the equilibrium form:
  \begin{align}
    \partial_{h_\alpha} J_\alpha (\xs^{\star})=\diffu_{\alpha,\alpha}(\xs^{\star}).
    \label{eq:stallingFDR}
  \end{align}
  Crucially, the validity of Eq.~\eqref{eq:stallingFDR} requires that we consider the response and fluctuations at parameter values $\xs^\star$, where the current $J_\alpha(\xs^\star) =0$ stalls \emph{internally}:
  all contributing stochastic transitions need to be internally equilibrated, \ie they are microscopically reversible.

  \paragraph{Setup ---}
  We consider a generic system with a finite number of states $n \in \{ 1,2,\dots,N \}$.
  Possible transitions between states form a connected network, where we draw one edge $\edge$ connecting two states  for each distinguishable physical mechanism by which the transition may occur.
  Stochastic thermodynamics requires that transitions along an edge $\edge$ are always possible in both directions~\cite{Esposito2012,Seifert2011}.
  In some cases it is thus useful to consider $\edge=(+\edge,-\edge)$ as a pair of directed edges $\pm\edge$.
  The evolution of the system is modeled as a Markov jump process and can be visualized as a random walk on the network. 
  A physical model is defined by prescribing the forward and backward transition rates $w_{+\edge}(\xs)$ and  $w_{-\edge}(\xs)$ for each edge $\edge$ as functions of a set of physical parameters $\xs$.
  The fluctuating current along an edge~$\edge$, $\jmath_\edge{(t)} := \sum_{k}\delta(t-t_j)(\delta_{+\edge,\edge_k} - \delta_{-\edge,\edge_k})$, is a stochastic variable, which peaks if the system transitions along the directed edge $\edge_k$ at a jump time $t_k$.
  Physical fluctuating currents $\jmath_\alpha$ which are associated to the transport of a physical quantity (particles, energy, \etc) are weighted edge currents, $\jmath_\alpha = \sum_\edge \inc^\alpha_\edge \jmath_\edge$,
  where $\inc^\alpha_{+\edge} = -\inc^\alpha_{-\edge}$ specifies the amount $\inc^\alpha_\edge$ exchanged with external reservoirs upon a transition along edge $\edge$~\cite{Seifert2011}.
  Ergodicity ensures that time-integrated currents $\Phi^{(\tau)}_\alpha$ are almost surely extensive in time, \ie for $\tau \to \infty$, $\tau^{-1}\Phi^{(\tau)}_\alpha\to J_\alpha$.
  Deviations from the stationary average $J_\alpha$ scale diffusively, and a generalized diffusion constant $D_{\alpha,\beta}$ (or, equivalently $D_{\edge,\edge'}$) is obtained as a correlation integral \cite{Lebowitz.Spohn1999,Altaner.etal2015}:
  \begin{align}
    \diffu_{\alpha\beta} :&= \lim_{\tau\to\infty} \frac{1}{2\tau} \langle \Phi^{(\tau)}_\alpha\Phi^{(\tau)}_\beta\rangle\nonumber \\ &= \int_0^\infty \langle (\jmath_\alpha(0) - J_\alpha)\,(\jmath_\beta(t) - J_\beta)\rangle.
    \label{eq:diffusivity}
  \end{align}


  \paragraph{Local detailed balance, conjugacy and local coupling ---}

  A central assumption in stochastic thermodynamics is \textit{local detailed balance} (LDB), which relates physical currents $J_\alpha$ to their conjugate forces $h_\alpha$.
  The latter are uniquely determined by the intensive parameters of the reservoirs.
  Besides justifying Markovian dynamics, LDB also ensures thermodynamic consistency~\cite{Seifert2011,Esposito2012,Polettini.etal2016}.
  It enters as a constraint on the motance $\mot_\edge$~\cite{Altaner.etal2012}, defined as the log-ratio of forward and backward transition rates:
  \begin{align}
    \mot_\edge := \ln\left(\frac{\w_{+\edge}}{\w_{-\edge}}\right) \stackrel{\text{(LDB)}}{=} \sum_\alpha h_\alpha \inc^\alpha_\edge.
    \label{eq:LDB}
  \end{align}
  Physically, the motance characterizes the entropy change in the system's environment associated to the transition along edge $\edge$.
  Thus, for systems obeying LDB, the stochastic steady-state entropy production rate $\sum_\edge J_\edge \mot_\edge$ takes the well-known bi-linear form involving currents and forces $ \sum_\alpha J_\alpha h_\alpha$.
  In the following, we consider a single physical current $\jmath_\alpha$ supported on a subset of edges  $\edges_\alpha = \left\{\edge: d^\alpha_\edge \neq 0\right\}$.
  Even without the full knowledge about other physical currents $j_\beta$ and forces $h_\beta$, a definition of the current which is conjugate to the force $h_\alpha$ is possible.
  In accordance with LDB, Eq.~\eqref{eq:LDB}, the conjugate force $h_\alpha$ is a parameter that contributes linearly to the motances $\mot_\edge$,  with a slope determined by the edge increments: $\inc^\alpha_\edge = \partial_{h_\alpha}\mot_\edge  $.
  Then, it can be shown~\cite{Polettini.etal2016} that a term $h_\alpha J_\alpha$ appears as in independent term in the stochastic entropy production.
  We further say that a force $h_\alpha$ couples \emph{locally} to its conjugate current $\jmath_\alpha$, if $\partial_{h_\alpha}\w_{\pm\edge}=0$ for all noncontributing edges $\edge \not \in \edges_\alpha$.
  While conjugacy ensures that the force $h_\alpha$ does not change the anti-symmetric part of the rate pair $\w_{\pm\edge}$ for noncontributing edges, locality also demands that it does not affect their symmetric (kinetic) properties.

  \paragraph{Stalled currents and internal equilibrium ---}
  Our main result Eq.~\eqref{eq:stallingFDR} concerns systems where a particular current $\jmath_\alpha$ stalls, \ie where its average vanishes: $J_\alpha = 0$.
  While \emph{all} currents stall in systems that are at equilibrium, $\xs = \xs^\text{eq}$, in nonequilibrium systems, a single current $J_\alpha$ may stall, while other currents are of arbitrary magnitude.
  In general one can tune the conjugate force $h_\alpha$ to its stalling value $h_\alpha^\star$, which is a function of the remaining system parameters.
  Considering stalled currents in systems far from equilibrium is not merely a mathematical exercise:
  In many experiments on molecular motors, applying a mechanical force in order to stall the motor velocity is used to infer the force generated at a given value of the chemical concentrations~\cite{Juelicher.Bruinsma1998, Carter.Cross2005}.
  In nanoscopic electronic devices connected to several leads, stalling several currents in the presence of other nonvanishing physical currents is at the heart of what is known as a B\"uttiker probe \cite{Buettiker1986,Dubi.DiVentra2011,Sanchez.etal2013,Brandner.etal2013}.
  

  The validity of our result requires not only phenomenological stalling, but internal stalling.
  For a current that is supported on a single edge only, phenomenological stalling and internal stalling are equivalent.
  For a physical current with contributions from multiple transitions, phenomenological stalling is not sufficient.
  If in a system the vanishing of a physical current also gives rise to internal stalling, then there is a lot more structure to its internal transitions.
  We will discuss a physical model that features internal stalling below.
  Internal stalling further implies that the turmoils intrinsic to the transitions contributing to a physical current do not lead to any internal entropy production.
  To appreciate this fact, note that a stalled current, $J_\alpha = \sum_{\nu=1,2} \inc^\alpha_{\edge_{\nu}}J_{\edge_\nu}  = 0$, only has a vanishing associated entropy production $\sum_{i=1,2} \mot_{\edge_{\nu}}J_{\edge_\nu}$, if the edge motances obey the condition $\inc^\alpha_{\edge_{2}}\mot_{\edge_{1}}=\inc^\alpha_{\edge_{1}}\mot_{\edge_{2}}$.

  \begin{figure}[t]
    \centering
    \includegraphics{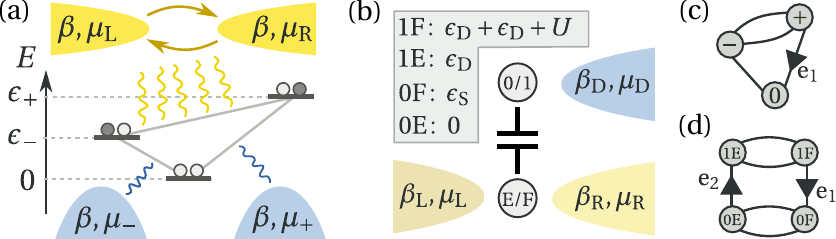}
    \caption{(a) A tunneling current through a quantum point contact is used to measure the occupancy of a double quantum dot coupled to two leads at chemical potentials $\mu_\pm$.
    (b) Two capacitatively coupled quantum dots constitute a physical implementation of Maxwell's demon, \cf Ref.~\cite{Strasberg.etal2013}.
      The lower dot (the system) is coupled to two reservoirs L and R at different chemical potentials $\beta_\mathrm{L},\mu_\mathrm{L}$ and $\beta_\mathrm{R},\mu_\mathrm{R}$.
      The upper one (the demon) interacts only with a single reservoir at inverse temperature $\beta_\text{D}$ and chemical potential $\mu_\text{D}$.
      If both dots are occupied, we have an additional contribution $U$ to the single-electron energy levels $\eps_\text{S}$ and $\eps_\text{D}$ .
    (c,d) Abstract network of states (a,b) with current supporting edges.
    }
    \label{fig:nano-examples}
    \vspace{-1.3em}
  \end{figure}

  \paragraph{Applications and examples ---}
  As is the case for equilibrium systems, our result can be used to infer response relations from fluctuations and vice versa.
  This is useful if one of these observables is more easily accessible in a given setup:
  in experiments it may be easy to tune external forces, while measuring detailed counting statistics provides a bigger challenge.
  Conversely, changing physical parameters in computer simulations may come at the expense of computation time for new simulations, while fluctuations are easily accessible within a single set of parameters.

  In the following we illustrate the application and consequences of our result for two real systems far from equilibrium.
  Both of them are nanoscopic devices, which have been studied theoretically \cite{Cuetara.Esposito2015,Strasberg.etal2013} and implemented experimentally \cite{Fujisawa.etal2006,Koski.etal2015}.
  In the first one, Fig.~\ref{fig:nano-examples}(a), a tunneling current created by a voltage difference $V=\mu_\mathrm{R} - \mu_\mathrm{L}$ through a quantum point contact (QPC) is used to measure the full counting statistics of a double quantum dot (DQD), which itself is coupled to two reservoirs at chemical potentials $\mu_\pm$~\cite{Fujisawa.etal2006,Cuetara.Esposito2015}.
  The setup is placed in a cryostat at temperature $\beta$.
  The DQD can be empty ($E=0$) or in one of two electronic states with energies $E = \eps_\pm$.
  Electrons tunneling through the QPC induce transitions in the DQD, which can be effectively reduced to two physically distinguishable mechanisms represented by the two upper edges in Fig.~\ref{fig:nano-examples}(c), \cf Ref.~\cite{Cuetara.Esposito2015} for the details.
  The chemical potential $\mu_{+}$ serves as a local coupling parameter for the current along edge $\edge_1$, which has motance $\mot_{\edge_1} = \beta(\eps_+ - \mu_+)$.
  Its conjugate current $\jmath_\alpha$ has weight $\inc^\alpha_{\edge_1} = \partial_{\mu_+}\mot_{\edge_1}= -\beta$.
  Perturbations around the stalling value  $\mu_+^\star$ obey the nonequilibrium FDR, Eq.~\eqref{eq:stallingFDR}.
  In terms of the edge current, whose statistics are experimentally accessible~\cite{Fujisawa.etal2006}, the FDR reads $\partial_{\mu_+} J_{\edge_1}^\star = -\beta \diffu^\star_{\edge_1,\edge_1}$.
  It can be used to \emph{measure} the temperature $\beta$ in situations far from equilibrium, when it is not clear if the dissipation due to the measuring current heats the device above the assumed cryostat temperature.
  Our second example, Fig.~\ref{fig:nano-examples}(b), is a capacitively interacting double quantum dot, which has recently been used as a physical implementation of Maxwell's demon~\cite{Strasberg.etal2013,Koski.etal2015}. 
  The upper dot, which constitutes the demon, is coupled to a reservoir with inverse temperature $\beta_\text{D}$ and chemical potential $\mu_\text{D}$.
  Similar to the first example, the lower dot couples to two reservoirs L and R, with a chemical potential difference $V = \mu_\mathrm{L}-\mu_\mathrm{R}$.
  We are interested in the fluctuating current $\jmath_\alpha = (\eps_\text{D} + U - \mu_\text{D})\jmath_{\edge_1} - (\eps_\text{D} - \mu_\text{D})\jmath_{\edge_2}$, associated with the transport of energy into the upper (blue) reservoir, \cf Fig.~\ref{fig:nano-examples}(b).
  Notice that in this case the physical current is supported on \emph{multiple} edges, \cf Fig.~\ref{fig:nano-examples}(d).
  The structure of the transitions ensures internal stalling:
  Due to conservation of probability, a vanishing stationary current along edge $\edge_1$ implies a vanishing current along edge $\edge_2$ and vice versa.
  The motances of these edges are $\mot_{\edge_1} = \beta_\text{D}(\eps_\text{D} + U - \mu_\text{D})$  and $\mot_{\edge_2} = -\beta_\text{D}(\eps_\text{D}  - \mu_\text{D})$, which implies that $\jmath_\alpha$ is conjugate to the inverse temperature $\beta_\text{D}$.
  Thus, at stalling the nonequilibrium FDR~\eqref{eq:stallingFDR} holds, which reads in terms of the edge currents as
  \begin{align}
    U \partial_{\beta_\text{D}} J_{\edge_1}^\star =& (\eps_\text{D}+U-\mu_\text{D})^2 \diffu^\star_{\edge_1,\edge_1} + (\eps_\text{D}-\mu_\text{D})^2 \diffu^\star_{\edge_2,\edge_2}\nonumber \\
    &- 2(\eps_\text{D}+U+\mu_\text{D})(\eps_\text{D}-\mu_\text{D})\diffu^\star_{\edge_1,\edge_2}. 
    \label{eq:demonFDR}
  \end{align}
  As an illustration of our result, in Fig.~\ref{fig:FDR-DQD} we show the difference $\partial_{\mu_\text{D}}J_{\alpha} - \diffu_{\alpha,\alpha}$ between conjugate response and fluctuations of the energy current into the demon bath as a function of demon temperature $\beta_\text{D}$ and lead voltage $V=\mu_\text{L} - \mu_\text{R}$.
  Notice that we work with reduced units where $\beta_\text{D}$ and $V$ are measured in units of the (inverse) thermal energy $\beta$.
  The black and yellow solid lines indicate the stalling values for energy  and charge current, respectively.
  Due to conservation of energy and matter, these are the only two physical currents flowing through the system~\cite{Cuetara.Esposito2015}.
  The two stalling lines cross when the system is at equilibrium, \ie when the temperature of the whole system is uniform $\beta_\text{D}=1$ and the voltage difference between the two leads vanishes, $V=0$.
  The orange dashed line indicates parameter values where the FDR~\eqref{eq:stallingFDR} holds.
  \begin{figure}[t]%
    \centering
    \includegraphics[width=.48\textwidth]{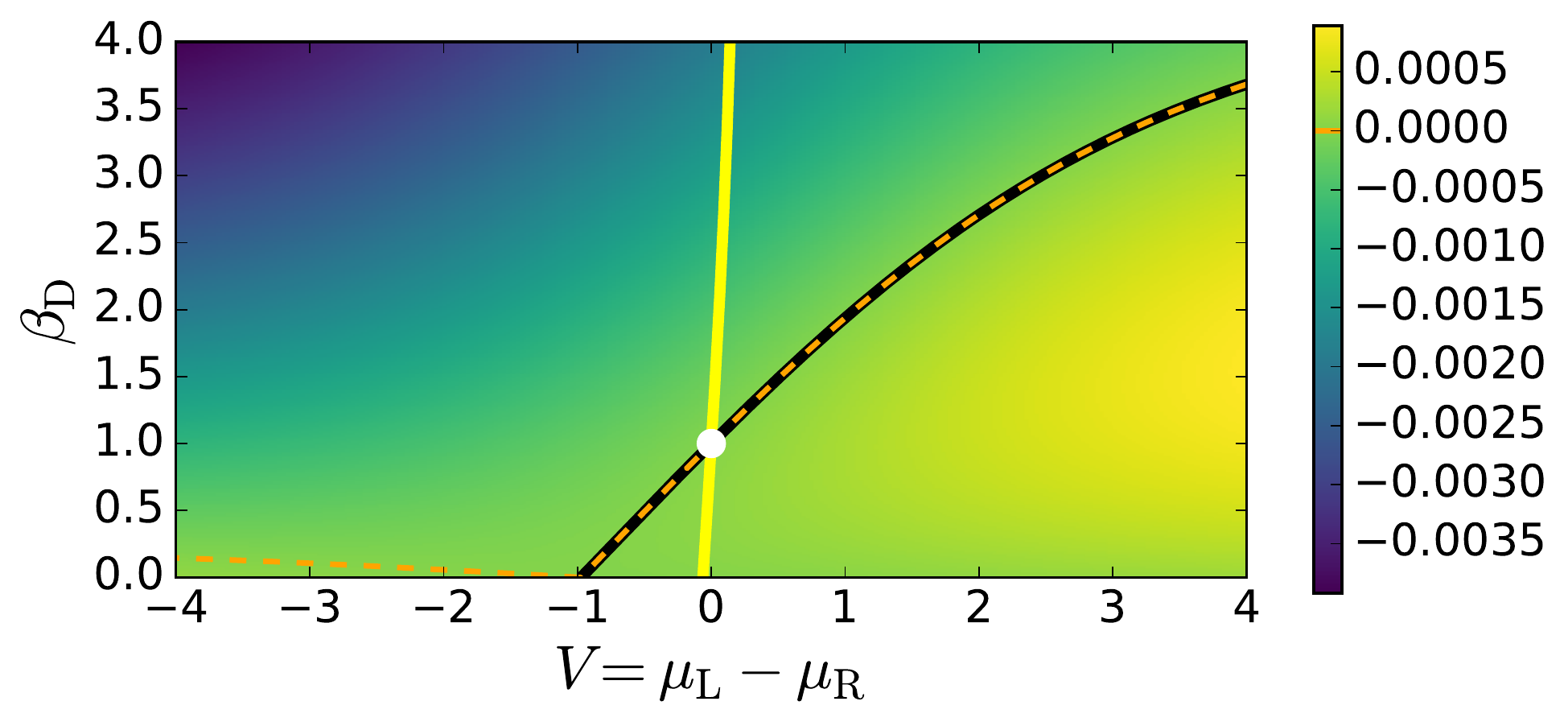}\vspace{-1em}
    \caption{Difference $\partial_{\beta_{\mathrm{D}}}J_\alpha - \diffu_{\alpha,\alpha}$ for the energy current to the demon bath.
    On the black stalling line $\beta_\text{D}^\star(V)$ the FDR holds (dashed orange line).
    The yellow line is the stalling line for the charge current between.
    At equilibrium, $(\beta_\text{D}, V) = (0,1)$, both currents vanish.
    The full set of model parameters with their numerical values is described in the supplementary material.
    }%
    \label{fig:FDR-DQD}%
    \vspace{-1em}
  \end{figure}%

  \paragraph{Proof of our main result --- }
  We first focus on several currents supported on individual edges and then move to physical currents;
  full details are deferred to the Supplementary Material.
  Recent work on the theory of large deviations \cite{Lebowitz.Spohn1999,Andrieux.Gaspard2007,Harris.Schuetz2007,BulnesCuetara.etal2011,Wachtel.etal2015,Touchette2009} has produced analytical methods to access asymptotic current statistics for Markov jump processes.
  Letting $\jmath_{\edge_\mu}$ be the currents that flow along a subset of edges labeled by $\mu$, the central quantity for calculations is the so-called tilted generator $\tmat(\vec{q}; \xs)$.
  It is obtained from the generator of the Markov jump process by replacing the off-diagonal entries corresponding to transitions $\w_{\pm\edge_\mu} \to \w_{\pm\edge_\mu} e^{\pm q_\mu}$, where $\vec{q} = (q_\mu)_\mu$ are auxiliary counting variables and the dependency on $\xs$ is inherited from the rates.
  The largest eigenvalue $\lambda(\vec{q};\xs)$ of $\tmat(\vec{q};\xs)$  is the scaled cumulant generating function (SCGF), whose first and second derivatives produce the averaged edge currents and the edge diffusivities
  \begin{align}
    J_{\edge_\mu} = \left.\frac{\partial \lambda}{\partial q_\mu}\right\vert_{\vec{q}=0} \text{ and }\,\, \diffu_{\edge_\mu,\edge_\nu} = \frac{1}{2}\left.\frac{\partial^2 \lambda}{\partial q_\mu \partial q_\nu}\right\vert_{\vec{q}=0}.
    \label{eq:cumulants}
  \end{align}
  Our first mathematical result concerns the determinant $\Delta(\vec{q};\xs):=\det\tmat(\vec{q};\xs)$.
  In particular, the crucial {\bf Theorem 1} states when only rates $\w_{\pm\edge_\mu}$ depend on the corresponding parameter $x_\mu$, in such a way that the motance increases linearly as  $\partial_{x_\mu} B_{\edge_\mu} = 1$, then there exist constants $\xs^\star$ for which
  \begin{align}
      \Delta(\xs^\star - \xs; \xs) = 0,
      \label{eq:determinant-jarzynski}
    \end{align}
  independent of $\xs$.
  As corollaries, at $\xs = \xs^*$ the average edge currents vanish, $J_{\edge_\mu}(\xs^\star) = 0$, therefore $\xs^\star$ can be identified as the stalling values of the parameters  ({\bf Corollary 1}).
  Furthermore ({\bf Corollary 2}), at stalling the mixed FDR holds
    \begin{align}
      \partial_{\x_\nu} J_{\edge_\mu}(\xs^\star) +  \partial_{\x_\mu} J_{\edge_\nu}(\xs^\star) = 2 D_{\edge_\mu,\edge_\nu}(\xs^\star). 
      \label{eq:edge-stalling-response}
    \end{align}
  Both corollaries can be obtained by taking mixed total derivatives of Eq.~\eqref{eq:determinant-jarzynski}, evaluating at stalling $\xs=\xs^\star$ and using the fact that the SCGF $\lambda(\vec{q};\xs)$ is the isolated dominant eigenvalue of $\tmat(\vec{q};\xs)$, and thus a factor of $\Delta(\vec{q};\xs)$.
  Finally, in our main {\bf Theorem 2} we consider a physical current
    \begin{align}
      \jmath_\alpha = \sum_{\edge_\mu \in \edges_\alpha} \inc^\alpha_{\edge_\mu} j_{\edge_\mu},
    \end{align}
  supported on several edges $e_\mu$, such that its conjugate force $h_\alpha$ is local, in the sense that $\partial_{h_\alpha} \w_{\pm\edge} = 0$ for all $\edge \not \in \edges_\alpha$.
  Under the assumption of LDB and internal stalling we prove the main result
    \begin{align}
      \partial_{h_\alpha}J_\alpha(\xs^\star) = \diffu_{\alpha,\alpha}(\xs^\star).
      \label{eq:main-result-pheno}
    \end{align}
  Assuming the former results, the proof of the latter is straightforward.
  By multi-linearity of the cumulants, we rewrite the diffusivity $\diffu_{\alpha,\alpha}$ as a linear combination of edge diffusivities $\diffu_{\edge_\mu,\edge_\nu}$.
  Using Eq.~\eqref{eq:edge-stalling-response} we express the latter in terms of edge responses.
  LDB prescribes $x_\mu = x_\mu(h^\alpha) := \inc^\alpha_{\edge_\mu}h_\alpha$ and after substituting $\inc^\alpha_{\edge_\mu}=\partial_{h_\alpha} x_\mu$, we use the chain rule and find the derivative with respect to $h_\alpha$.
  We provide the details of the proof in the supplementary material and note that it in fact provides a more general mixed FDR analogous to Eq.~\eqref{eq:edge-stalling-response} for the physical currents.


  The above outline of the proof allows to appreciate the conditions on which it stands.
  It further sheds light on the occurrence of nonequilibrium FDRs in other models which do not obey the requirements of local coupling and uniform stalling.
  From the logic of the proof we realize that the latter conditions are sufficient to ensure the validity of the determinant relation~Eq.~\eqref{eq:determinant-jarzynski}.
  However, this does not mean that they are necessary.
  In the case of the  models described in Refs.~\cite{Lau.etal2007} and~\cite{Liepelt.Lipowsky2007}, which treat the dynamics of the molecular motor kinesin under the influence of chemical potentials and mechanical forces, an FDR relates the response of the motor velocity around stalling to an applied mechanical force~\cite{Altaner.etal2015}.
  While these models are thermodynamically consistent in the sense that they satisfy LDB, the first model does not satisfy uniform stalling whereas in the latter model perturbations are not local.
  In both cases, the occurrence of a nonequilibrium FDR at stalling is the consequence of a determinant relation analogous to Eq.~\eqref{eq:determinant-jarzynski}.
  For Ref.~\cite{Lau.etal2007}, it follows from a global symmetry for the entire tilted generator, whereas for Ref.~\cite{Liepelt.Lipowsky2007} it relies on symmetries which are assumed for experimentally inaccessible transition rates.
  Importantly, in these two examples the stalling FDR is not robust against changing kinetic parameters such as activation barriers --- which means that either the modeling or the physics of these systems are fine-tuned.
  Such a fine-tuning may have important concequences for other properties of the system, \eg (fluctuations of) the efficiency of conversion processes \cite{Brandner.etal2013}. 
 Investigating these ideas further goes beyond the scope of this Letter. 
  However, they underline the importance of identifying general conditions for the validity of FDRs in systems far from equilibrium.
  
  
  \paragraph{Conclusion ---}
  In this work we identified local coupling and internal stalling as sufficient conditions for the validity of an equilibrium-like FDR in systems far from equilibrium. 
  Stalling conditions are commonly implemented in nanoscopic electronic devices and have been used to probe physical properties of small biological systems like molecular motors.
  The main open question from a theoretical standpoint is finding the general conditions such that stalled currents exhibit internal equilibration.
  While internal equilibration of all transitions is the definition of equilibrium, our stalling FDR is valid in situations where only some of these transitions are equilibrated while other, unobserved currents of arbitrary magnitude are present.
%

  As a final remark let us return to the questions formulated in the introduction.
  Our result states that although a system may at first glance resembles an equilibrium system in the sense that it obeys an FDR, it does not need to be so.
  The validity of an FDR is not a sufficient hallmark for equilibrium conditions.
  A true distinction between equilibrium and nonequilibrium conditions requires more scrutiny.
  For example, checking the value of the third moment (skewness) constitutes a more thorough, second glance.

  \paragraph*{Acknowledgments.}
  This research was supported by the National Research Fund Luxembourg in the frame of the project FNR/A11/02 and of the Postdoc Grant 5856127.
  Additional funding was granted by the European Research Council (project 681456).

  \bibliography{response}

\onecolumngrid
\appendix

\section{Supplementary Material}

\newcommand{\bs}[1] {\boldsymbol{#1}}

\newcommand{\tar}{\tau}
\newcommand{\sou}{\sigma}

\newcommand{\R}{\textup{\footnotesize R}}

\subsection{Proof of the main results}

Let us recall the setup.
We consider a network (graph) composed of $N$ vertices (states of the system) and of edges $\edge \in \edges$ connecting them (transitions).
Each edge is assigned an arbitrary orientation and we assume that the graph is connected.
We consider a continuous-time Markov jump process on such a network, with probability rates per unit time $\w_{\pm\edge}$  of performing a transition along the positive ($+\edge$) and negative ($-\edge$) direction of edge $\edge$.
The generator $\tmat$ of the Markov jump process is an $N\times N$ matrix where off-diagonal entries $w_{lm}$ amount to the total rates of jumping from  vertex $m$ to vertex $l$.
Diagonal entries the total exit rate out of a state $n$, $w_{nn} = \sum^{(n)}_{\pm\edge}w_{\pm\edge}$, where the sum runs over all directed edges leaving vertex $n$.
The tilted generator for \emph{all} edge currents, $\tmat(\bs{q})$, is obtained by replacing  $w_{\pm \edge} \to w_{\pm \edge} e^{\pm q_\edge}$ in the off-diagonal entries of the generator.
The auxiliary variables $q_\edge$ are called the counting fields.
The SCGF of the currents along a subset of edges of interest, $\edges' = \{\edge_\mu\}_\mu\subset \edges$, is the unique dominant Perron eigenvalue of the tilted generator evaluated at $q_{\edge} = 0$ for all $\edge \notin \edges'$.

\setcounter{theorem}{0}
\begin{theorem}
Let $\edges = \{\edge_\mu\}_\mu$ be a subset of the edge space of the network, and consider a parametrization of the rates $w \to w(\bs{\xs})$, $\xs=\{\x_\mu\}_\mu$, such that $\partial_{\x_\mu}\w_{\pm\edge_\nu} = 0$ for all $\mu \neq \nu$. We further suppose that
\begin{align}
\partial_{x_\mu} \ln \frac{\w_{+\edge_\mu}}{\w_{-\edge_\mu}} = 1. \label{eq:motancemotance}
\end{align}
Then, there exist constants $\xs^\star = (\x_\mu^\star)_\mu$, such that the  determinant of the tilted generator for the currents along edges $\edge_\mu \in \edges$ obeys
  \begin{align}
    \Delta(\xs^\star - \xs; \xs) = 0,\quad \forall \xs.
    \label{eq:determinant-jarzynski-app}
  \end{align}
\end{theorem}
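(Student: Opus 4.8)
The plan is to exhibit a similarity transformation that turns the tilted generator, restricted to the diagonal path $\vec{q} = \xs^\star - \xs$, into the transpose of the ordinary (untilted) generator $\tmat(\vec{0};\xs)$. The point of using the transpose is that $\tmat(\vec{0};\xs)\transpose$ annihilates the uniform vector (its rows sum to zero, since probability is conserved), so its determinant vanishes for every $\xs$. Concretely, I would search for a single diagonal matrix $\mat{D} = \mathrm{diag}(e^{\psi_n})_{n\in\verts}$ realizing $\tmat(\xs^\star - \xs;\xs) = \mat{D}\,\tmat(\vec{0};\xs)\transpose\,\mat{D}^{-1}$. Because a similarity leaves the determinant invariant, this at once gives $\Delta(\xs^\star-\xs;\xs) = \det\tmat(\vec{0};\xs)\transpose = 0$ for all $\xs$, which is \eqref{eq:determinant-jarzynski-app}.

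First I would verify the entries. Tilting does not touch the diagonal of $\tmat$ (the exit rates), and a diagonal conjugation cannot change a diagonal either, so the diagonals of the two sides coincide automatically for every $\xs$. For an off-diagonal entry on a counted edge $\edge_\mu$ joining $s(\edge_\mu)\to t(\edge_\mu)$, equating the tilted forward rate $\w_{+\edge_\mu}e^{x_\mu^\star - x_\mu}$ with the conjugated backward rate coming from the transpose forces $\psi_{t(\edge_\mu)} - \psi_{s(\edge_\mu)} = \mot_{\edge_\mu}(x_\mu) + (x_\mu^\star - x_\mu)$. Here hypothesis \eqref{eq:motancemotance} is decisive: integrating $\partial_{x_\mu}\mot_{\edge_\mu} = 1$ gives $\mot_{\edge_\mu}(x_\mu) = x_\mu + c_\mu$ with $c_\mu$ constant, so the required increment collapses to the $\xs$-independent value $x_\mu^\star + c_\mu$. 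This is exactly what allows one fixed $\mat{D}$, and hence one fixed $\xs^\star$, to work simultaneously for all $\xs$; the transpose is indispensable, since it is the swap $\w_+\leftrightarrow\w_-$ that lets the $x_\mu$-dependence of the motance cancel that of the counting factor $e^{q_\mu}$. On a noncounted edge the same matching fixes the $\xs$-independent increment $\psi_{t(\edge)}-\psi_{s(\edge)} = \mot_\edge$.

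The remaining and hardest step is to show that a potential $\psi$ with these prescribed increments --- equivalently, the constants $\xs^\star$ --- really exists. A vertex potential realizing a given edge field exists if and only if the field has vanishing circulation around every cycle, i.e.\ is a gradient. I would therefore first solve $\psi_{t(\edge)}-\psi_{s(\edge)} = \mot_\edge$ on the noncounted edges alone, and then simply \emph{define} the stalling constants by reading off $x_\mu^\star := \psi_{t(\edge_\mu)} - \psi_{s(\edge_\mu)} - c_\mu$ on the counted edges. The genuine content of the existence claim is thus that the noncounted motances admit such a potential, i.e.\ that they carry no net circulation around any cycle lying entirely outside $\edges$; the free increments on the counted edges then absorb all remaining cycle affinities. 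Physically this demands that the counted edges meet every driven cycle, so that each nonequilibrium affinity can be balanced by tuning a counted force --- the structural requirement implicit in ``there exist constants $\xs^\star$,'' and the place where internal stalling enters. Once $\psi$ and $\xs^\star$ are fixed, the similarity is exact and \eqref{eq:determinant-jarzynski-app} follows.
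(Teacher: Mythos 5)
Your similarity-transformation route has a genuine gap, and it sits exactly where you located the ``hardest step.'' To conjugate $\tmat(\xs^\star-\xs;\xs)$ into $\tmat(\vec{0};\xs)\transpose$ with a diagonal $\mat{D}=\mathrm{diag}(e^{\psi_n})$ you must match every off-diagonal entry, and on the \emph{noncounted} edges this forces $\psi_{t(\edge)}-\psi_{s(\edge)}=\mot_\edge$ edge by edge. Such a potential exists only if the motances have zero circulation around every cycle avoiding the counted set $\edges$, i.e.\ only if the subnetwork of noncounted edges satisfies detailed balance (Kolmogorov's criterion). That is an extra hypothesis: the theorem does not assume it, and it fails in the paper's own flagship application --- in the double-dot demon of Fig.~1(d) the counted edges are the two demon transitions, while the remaining four edges form two-cycles (charging from lead L versus lead R) whose affinity is proportional to $V=\mu_\mathrm{L}-\mu_\mathrm{R}\neq 0$. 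So your argument establishes a strictly weaker statement, and the ``structural requirement'' you declare to be implicit in ``there exist constants $\xs^\star$'' is not in fact required.

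The repair is to demand cancellation globally rather than edge by edge. Let $\overline{\tmat}$ be the generator obtained by deleting all counted edges and let $\vec{V}>0$ be its stationary vector, $\overline{\tmat}\vec{V}=0$; this exists for arbitrary (generally nonequilibrium) rates on the noncounted edges. Writing the motance as $\ln(\w_{+\edge_\mu}/\w_{-\edge_\mu})=x_\mu-\bar{x}_\mu$, set $x_\mu^\star=\bar{x}_\mu+\log\bigl(V_{t(\edge_\mu)}/V_{s(\edge_\mu)}\bigr)$. A direct computation then shows that $\bigl[\tmat(\xs^\star-\xs;\xs)-\overline{\tmat}\bigr]\vec{V}=0$ identically in $\xs$, because for each counted edge the two diagonal and two off-diagonal contributions cancel among themselves against the components of $\vec{V}$. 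Hence $\vec{V}$ is a null eigenvector of $\tmat(\xs^\star-\xs;\xs)$ and the determinant vanishes for all $\xs$. In other words, the correct ``potential'' is $\psi_n=\log V_n$, built from the stationary state of the reduced dynamics rather than from the noncounted motances: the noncounted edges need only balance \emph{in aggregate} at each vertex (stationarity), not \emph{individually} (detailed balance), which is precisely the distinction your transpose construction cannot accommodate.
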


\begin{proof}

We first prove the result under the assumptions that edges $\edge_\mu$ do not share a vertex, and that the network does not allow for multiple edges between the same vertices. These conditions do not have any structural role, but including them in a general proof would make the notation awkward. We then prove that the theorem holds in all generality by showing that adding one further arbitrary edge, be it in parallel or coincident with some other edge, does not modify our argument.  Then, an edge is uniquely identified by its end vertices. If $+\edge = n \gets m$  we define $w_{nm} =  w_\edge$.

Let $\sou_\mu$ denote the source and $\tau_\mu$ the target of edge $+\edge_\mu = \tau_\mu \gets \sou_\mu$. We remind that the tilted generator has entries
\begin{align}
\tmat(\bs{q})_{l,m} = \left\{
\begin{array}{ll}
- w_{ll}, & \mathrm{if}~ l = m\\
w_{+\edge_\mu}e^{+ q_\mu}, & \mathrm{if}~ \exists \mu ~\mathrm{s.t.}~ l = \tar_\mu, m = \sou_\mu \\
w_{-\edge_\mu} e^{- q_\mu}, & \mathrm{if}~  \exists \mu ~\mathrm{s.t.}~ l = \sou_\mu, m = \tar_\mu\\
w_{lm}, & \mathrm{elsewhere}
\end{array}
 \right. .
\end{align}
From now on we omit to specify``$\exists \mu$ s.t.''. By Eq.\,(\ref{eq:motancemotance}) there exist constants $\bar{x}_\mu$ such that the motance reads
\begin{align}
\log \frac{w_{+\edge_\mu}}{w_{-\edge_\mu}} = x_\mu - \bar{x}_\mu .
\end{align}
Let us consider the Markovian generator $\overline{\tmat}$ obtained by setting to zero all rates $w_{\pm\edge_\mu} \stackrel{!}{=} 0$ corresponding to the edges $\edge_\mu$, effectively removing all edges $\edge_\mu$ from the network. Notice then that the diagonal entries (exit rates) of $\overline{\tmat}$ are given by
\begin{equation}
\begin{aligned}
\overline{w}_{ll} & = \sum_{m'} w_{m'l} , & & \mathrm{if} \; l \neq \sou_\mu, \tar_\mu  \\
\overline{w}_{ll} & = \sum_{m'  \neq \tar_\mu} w_{m'l}, & & \mathrm{if} \; l = \sou_\mu \\
\overline{w}_{ll} & = \sum_{m' \neq \sou_\mu} w_{m'l}, & &  \mathrm{if} \; l = \tar_\mu,
\end{aligned}
\end{equation}
where in this and similar expressions $\mu$ is intended to span from $1$ to $M$. Let
\begin{align}
x^\star_\mu = \bar{x}_\mu + \log \frac{V_{\tar_\mu}}{V_{\!\sou_\mu}} \label{eq:xast}
\end{align}
where $\vec{V}= (V_m)_{m=1}^M$ is some vector with nonvanishing entries. We write down the tilted generator evaluated at $\bs{q} = \bs{x}^\star - \bs{x}$:
\begin{align}
\tmat(\bs{x}^\star - \bs{x})_{l,m} & = \left\{
\begin{array}{ll}
- w_{ll}, & \mathrm{if}~ l = m\\
w_{+\edge_\mu} e^{ -x_\mu  + x^\star_\mu}, & \mathrm{if}~ l = \tar_\mu, m = \sou_\mu \\
w_{-\edge_\mu} e^{x_\mu - x^\star_\mu}, & \mathrm{if}~ l = \sou_\mu, m = \tar_\mu\\
w_{lm}, & \mathrm{elsewhere}
\end{array}
 \right. = \left\{
\begin{array}{ll}
- w_{ll}, & \mathrm{if}~ l = m\\
w_{-\edge_\mu} \frac{V_{\tar_\mu}}{V_{\sou_\mu}} , & \mathrm{if}~ l = \tar_\mu, m = \sou_\mu \\
w_{+\edge_\mu}  \frac{V_{\sou_\mu}}{V_{\tar_\mu}}, & \mathrm{if}~ l = \sou_\mu, m = \tar_\mu\\
w_{lm}, & \mathrm{elsewhere}
\end{array}
 \right. .
\end{align}
We notice that the difference between matrix $\tmat(\bs{x}^\star - \bs{x})$ and matrix $ \overline{\tmat}$ has entries
\begin{align}
[\tmat(\bs{x}^\star - \bs{x}) - \overline{\tmat}]_{l,m} = \left\{
\begin{array}{ll}
0, & \mathrm{if}~ l = m \notin \{\sou_\mu,\tar_\mu \} \\
 -  w_{+\edge_\mu}, & \mathrm{if}~ l = m  = \sou_\mu  \\
 -  w_{-\edge_\mu} , & \mathrm{if}~ l = m  = \tar_\mu  \\
w_{-\edge_\mu} \frac{V_{\tar_\mu}}{V_{\sou_\mu}}  & \mathrm{if}~ l = \tar_\mu, m = \sou_\mu \\
w_{+\edge_\mu}  \frac{V_{\sou_\mu}}{V_{\tar_\mu}}, & \mathrm{if}~ l = \sou_\mu, m = \tar_\mu\\
0, & \mathrm{elsewhere}
\end{array}
 \right. .
\end{align}
We now evaluate this matrix on vector $\vec{V} $ obtaining
\begin{align}
\left[ \tmat(\bs{x}^\star - \bs{x}) - \overline{\tmat} \right] \vec{V} = \sum_\mu \Big(  -  w_{+\edge_\mu}  V_{\sou_\mu}
-  w_{-\edge_\mu} V_{\tar_\mu} + w_{-\edge_\mu} \frac{V_{\tar_\mu}}{V_{\sou_\mu}} V_{\sou_\mu} + w_{+\edge_\mu}  \frac{V_{\sou_\mu}}{V_{\tar_\mu}} V_{\tar_\mu}  
\Big) = 0. \label{eq:zero}
\end{align}

Notice that this identity holds independently of the vector $\vec{V}$. Now, we let $\vec{V}$ be an arbitrary null eigenvector of $\overline{W}$ with all non-negative entries, $\overline{\tmat} \vec{V} = 0$. Then the last expressions tells us that $\vec{V}$ is also a null eigenvector of $\tmat(\bs{x}^\star - \bs{x})$ and we conclude.

Let us now consider an additional edge $\edge_{M+1}$ and an additional parameter that is consistent with the hypothesis of the theorem, in particular Eq.\,(\ref{eq:motancemotance}). 
Notice that edge $\edge_{M+1}$ might be in parallel with any other edge in the network, or it might have some vertex in common with one of the edges $\edge_\mu$, thus covering both situations excluded above. Without loss of generality, we can rearrange the ordering of vertices in such a way that edge $\edge_{M+1} = 1 \gets 2$ connects the first two vertices.  Let its tilted generator be denoted by $\tmat^{(M+1)}(\bs{q},q_{M+1})$, where $\tmat^{(M)}(\bs{q}) = \tmat(\bs{q})$. Let us also consider the generator of the dynamics $\overline{\tmat}^{(M+1)}$ where all edges $\edge_\mu$ and edge $\edge_{M+1}$ are removed. We have
\begin{align}
\tmat^{(M+1)} - \tmat^{(M)} & = \left( \begin{array}{ccc} 0 & w_{+\edge_{M+1}}  \left( e^{q_{M+1} }-1 \right) \\  w_{-\edge_{M+1}}  \left( e^{-q_{M+1} }-1 \right) & 0 & \ddots \\
&  \ddots &  \ddots  \end{array} \right)   \\
\overline{\tmat}^{(M+1)} - \overline{\tmat}^{(M)} & =   \left( \begin{array}{ccc} w_{-\edge_{M+1}}  & - w_{+\edge_{M+1}}  \\  - w_{-\edge_{M+1}} & w_{+\edge_{M+1}} &  \ddots \\
 & \ddots & \ddots  \end{array} \right),
\end{align}
where all entries outside the principal (upper left) $2\times2$ block are zero. We proceed like above, denoting $(\bs{y}^\star,y_{M+1})$ the new set of parameters obtained from Eq.\,(\ref{eq:xast}) by replacing $\vec{V}$ with $\vec{U}$:
\begin{subequations}
\begin{align}
y^\star_\mu & = \bar{x}_\mu + \log \frac{U_{\tar_\mu}}{U_{\!\sou_\mu}}, \quad \mu = 1,\ldots, M \\
y^\star_{M+1} & = \bar{y}_\mu + \log \frac{U_{1}}{U_{2}} 
\end{align}
\end{subequations}
We arrive at
\begin{multline}
 \tmat^{(M+1)}(\bs{y}^\star-\bs{y},y^\star_{M+1}-y_{M+1}) - \overline{\tmat}^{(M+1)} 
=  \tmat^{(M)}(\bs{y}^\star-\bs{y})  - \overline{\tmat}^{(M)} + \left( \begin{array}{ccc} -w_{-\edge_{M+1}}  & w_{-\edge_{M+1}} 
\frac{U_{1}}{U_{2}}  
 \\  w_{+\edge_{M+1}}   \frac{U_{2}}{U_{1}} 
& -  w_{+\edge_{M+1}} & \ddots \\
& \ddots  & \ddots  \end{array} \right).
\end{multline}
We now evaluate on $\vec{U}$. As explained above, Eq.~\eqref{eq:zero} holds irrespective of $\vec{V}$, hence it also holds on $\vec{U}$. We then obtain
\begin{multline}
\left[ \tmat^{(M+1)}(\bs{y}^\star-\bs{y},y^\star_{M+1}-y_{M+1}) - \overline{\tmat}^{(M+1)}   \right] \vec{U} 
=\left( \begin{array}{ccc} - w_{-\edge_{M+1}}    &w_{-\edge_{M+1}} 
\frac{U_{1}}{U_{2}} 
\\  w_{+\edge_{M+1}}   \frac{U_{2}}{U_{1}} 
& -  w_{+\edge_{M+1}}  & \ddots \\
& \ddots & \ddots  \end{array} \right) \vec{U}  = 0.
\end{multline}
Again, we conclude by assuming that $\vec{U}$ is a solution of $\overline{W}^{(M+1)} \vec{U} = 0$.
\end{proof}

\begin{corr}
  \label{theo:stalling-response-app}
At $\xs = \xs^\star$, the steady state currents $ J_{\edge_\mu}(\xs^\star)$ vanish.
\end{corr}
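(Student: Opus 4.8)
The plan is to read off the vanishing of the currents from the determinant identity of Theorem~1 by a single differentiation, exploiting that the Perron eigenvalue $\lambda(\vec{q};\xs)$ factors out of $\Delta$. First I would recall that, being isolated and simple, $\lambda(\vec{q};\xs)$ is an analytic factor of the determinant: $\Delta(\vec{q};\xs) = \lambda(\vec{q};\xs)\,R(\vec{q};\xs)$, with $R$ collecting the remaining eigenvalues and smooth near $\vec{q}=\mathbf 0$. The decisive structural input is that at $\vec{q}=\mathbf 0$ the tilted generator reduces to the bona fide Markov generator $\tmat(\mathbf 0;\xs)$, whose columns sum to zero; therefore $\lambda(\mathbf 0;\xs)=0$ \emph{identically in} $\xs$, which upon differentiation gives $\partial_{x_\nu}\lambda|_{\vec{q}=\mathbf 0}=0$. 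This sits alongside the definition $J_{\edge_\nu}(\xs)=\partial_{q_\nu}\lambda|_{\vec{q}=\mathbf 0}$.

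Next I would differentiate the identity $\Delta(\xs^\star-\xs;\xs)=0$ totally with respect to $x_\nu$. Because $\partial(x^\star_\mu-x_\mu)/\partial x_\nu=-\delta_{\mu\nu}$, the chain rule collapses the sum over counting-field arguments to a single term and yields
\begin{align}
  -\,\partial_{q_\nu}\Delta + \partial_{x_\nu}\Delta = 0,
\end{align}
with both partials evaluated at $(\xs^\star-\xs;\xs)$. Setting $\xs=\xs^\star$ sends the first argument to $\vec{q}=\mathbf 0$, so the relation holds at $(\mathbf 0;\xs^\star)$. Evaluating the two terms through the factorization $\Delta=\lambda R$ and using $\lambda(\mathbf 0;\xs^\star)=0$, the first becomes $\partial_{q_\nu}\Delta|_{(\mathbf 0;\xs^\star)}=J_{\edge_\nu}(\xs^\star)\,R(\mathbf 0;\xs^\star)$ (the term $\lambda\,\partial_{q_\nu}R$ drops), while the second becomes $\partial_{x_\nu}\Delta|_{(\mathbf 0;\xs^\star)}=\partial_{x_\nu}\lambda|_{(\mathbf 0;\xs^\star)}\,R(\mathbf 0;\xs^\star)=0$ by the identity just noted. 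What survives is $J_{\edge_\nu}(\xs^\star)\,R(\mathbf 0;\xs^\star)=0$.

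The main obstacle is not the algebra but justifying that one may cancel $R(\mathbf 0;\xs^\star)$, i.e.\ that $R(\mathbf 0;\xs^\star)\neq 0$. This is precisely where the connectedness hypothesis of Theorem~1 does its work: on a connected network $\tmat(\mathbf 0;\xs^\star)$ has a \emph{simple} zero eigenvalue (Perron--Frobenius), so every remaining eigenvalue is nonzero and their product $R(\mathbf 0;\xs^\star)$ cannot vanish. I would therefore invoke simplicity of the dominant eigenvalue exactly at this point, since it is what cleanly separates the trivial factor $\lambda$ from the nonzero cofactor $R$ and turns $J_{\edge_\nu}(\xs^\star)\,R(\mathbf 0;\xs^\star)=0$ into the claim $J_{\edge_\nu}(\xs^\star)=0$ for every $\nu$.
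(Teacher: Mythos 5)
Your proposal is correct and follows essentially the same route as the paper's proof: differentiate the determinant identity $\Delta(\xs^\star-\xs;\xs)=0$ once in $x_\nu$, factor out the dominant eigenvalue, use $\lambda(\mathbf 0;\xs)=0$ and the Perron--Frobenius simplicity of that eigenvalue to cancel the nonvanishing cofactor, and read off $J_{\edge_\nu}(\xs^\star)=\partial_{q_\nu}\lambda(\mathbf 0;\xs^\star)=0$. The only cosmetic difference is that you kill the $\partial_{x_\nu}\Delta$ term via $\partial_{x_\nu}\lambda|_{\vec q=\mathbf 0}=0$, whereas the paper notes directly that $\Delta(\mathbf 0;\xs)=0$ identically in $\xs$; these are equivalent.
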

\begin{proof}
Taking the total derivative of the determinantal equation Eq.\,(\ref{eq:determinant-jarzynski-app}) with respect to parameter $x_\mu$ and evaluating at $\bs{x} = \bs{x}^\star$ we obtain
\begin{align}
\frac{\partial}{\partial q_\mu} \Delta(\bs{0};\bs{x}^\star) = \frac{\partial}{\partial x_\mu} \Delta(\bs{0};\bs{x}^\star).
\end{align}
Notice that the right-hand side vanishes identically, because
\begin{align}
  \Delta(0;\bs{x}) = 0, \quad\forall \bs{x}
  \label{eq:detzero}
\end{align}
is the determinant of a stochastic matrix.
We now use the fact that the determinant of the tilted generator is the product of the eigenvalues, $\Delta(\bs{q};\bs{x}) = \prod_i \lambda_i(\bs{q};\bs{x})$ and that the dominant eigenvalue $\lambda_1(\bs{q}) = \lambda(\bs{q})$ is the SCGF. We then obtain
\begin{align}
0 = \frac{\partial \lambda}{\partial q_\mu} (\bs{0};\bs{x}^\star) \prod_{i > 1} \lambda_i(\bs{0};\bs{x}^\star)
\end{align}
where we used the fact that $\lambda(\bs{0};\bs{x}^\star) = 0$ so that the partial derivative only affects the dominant eigenvalue.
By the Perron--Frobenius theorem, this eigenvalue is isolated such that $\prod_{i > 1} \lambda_i(\bs{0};\bs{x}^\star)\neq0$.
We then conclude by recognizing that the current is  $ J_{\edge_\mu}(\xs^\star) = \partial_{q_\mu}  \lambda(\bs{0};\xs^\star)$.
\end{proof}

\begin{corr}
At $\xs = \xs^\star$, the fluctuation-response relation holds:
  \begin{align}
    \partial_{\x_\nu} J_{\edge_\mu}(\xs^\star) +  \partial_{\x_\mu} J_{\edge_\nu}(\xs^\star) = 2 D_{\edge_\mu,\edge_\nu}(\xs^\star), \quad\forall \mu,\nu.
    \label{eq:edge-stalling-response-app}
  \end{align}
\end{corr}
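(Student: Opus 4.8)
The plan is to push the reasoning behind Corollary~1 to second order by differentiating the determinantal identity Eq.~\eqref{eq:determinant-jarzynski-app} twice. Writing $f(\xs) := \Delta(\xs^\star - \xs; \xs)$, Theorem~1 guarantees $f \equiv 0$, so every total $\xs$-derivative of $f$ vanishes. Because the counting fields enter only through the combination $\xs^\star - \xs$, the chain rule turns each total derivative $\mathrm{d}/\mathrm{d}\x_\mu$ into the operator $\partial_{\x_\mu} - \partial_{q_\mu}$ acting on $\Delta(\bs{q};\xs)$ prior to the substitution $\bs{q} = \xs^\star - \xs$. Applying this twice and evaluating at $\xs = \xs^\star$, where $\bs{q} = \bs{0}$, I would obtain the four-term identity
\begin{align}
\partial_{q_\mu}\partial_{q_\nu}\Delta
- \partial_{\x_\mu}\partial_{q_\nu}\Delta
- \partial_{\x_\nu}\partial_{q_\mu}\Delta
+ \partial_{\x_\mu}\partial_{\x_\nu}\Delta = 0,
\label{eq:fourterm}
\end{align}
with all derivatives taken at $(\bs{0};\xs^\star)$.

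The crux is then to turn these second derivatives of the determinant into second derivatives of the SCGF. Here I would use that the dominant eigenvalue factors out of the determinant, $\Delta(\bs{q};\xs) = \lambda(\bs{q};\xs)\,R(\bs{q};\xs)$ with $R := \prod_{i>1}\lambda_i$, and that $R(\bs{0};\xs^\star) \neq 0$ because Perron--Frobenius makes $\lambda$ isolated. At the evaluation point $(\bs{0};\xs^\star)$ \emph{all} first derivatives of $\lambda$ vanish: the $q$-derivatives because $\partial_{q_\mu}\lambda(\bs{0};\xs^\star) = J_{\edge_\mu}(\xs^\star) = 0$ by Corollary~1, and the $\x$-derivatives because $\lambda(\bs{0};\xs) = 0$ identically in $\xs$, being the dominant eigenvalue of the stochastic untilted generator, \cf Eq.~\eqref{eq:detzero}. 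Consequently, in the product-rule expansion of each $\partial_a\partial_b\Delta$ (with $a,b$ ranging over $q$- and $\x$-directions) every term containing a bare first derivative of $\lambda$ drops, as does the term proportional to $\lambda(\bs{0};\xs^\star) = 0$, leaving the clean collapse $\partial_a\partial_b\Delta = R\,\partial_a\partial_b\lambda$ at $(\bs{0};\xs^\star)$.

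Substituting this collapse into Eq.~\eqref{eq:fourterm} and dividing by $R \neq 0$ reduces the identity to a statement purely about $\lambda$. I would then read off the three surviving types of terms from Eq.~\eqref{eq:cumulants}: $\partial_{q_\mu}\partial_{q_\nu}\lambda = 2 D_{\edge_\mu,\edge_\nu}$, while $\partial_{\x_\mu}\partial_{q_\nu}\lambda = \partial_{\x_\mu} J_{\edge_\nu}$ and $\partial_{\x_\nu}\partial_{q_\mu}\lambda = \partial_{\x_\nu} J_{\edge_\mu}$ since $J_{\edge_\mu} = \partial_{q_\mu}\lambda|_{\bs{q}=\bs{0}}$; the purely parametric term $\partial_{\x_\mu}\partial_{\x_\nu}\lambda$ vanishes for the same reason the first $\x$-derivatives did, namely $\lambda(\bs{0};\xs) \equiv 0$. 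Assembling these yields exactly Eq.~\eqref{eq:edge-stalling-response-app}.

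I anticipate the main obstacle to be precisely the bookkeeping behind the collapse $\partial_a\partial_b\Delta = R\,\partial_a\partial_b\lambda$: it hinges on the \emph{simultaneous} vanishing of all first derivatives of $\lambda$ at the stalling point, which combines the nontrivial input of Corollary~1 (stalling of the currents) with the trivial but essential fact that $\lambda$ vanishes on the whole zero-counting-field slice. Keeping track of which cross-terms survive, and verifying that the parametric second derivative $\partial_{\x_\mu}\partial_{\x_\nu}\lambda$ contributes nothing, is where an error would most easily creep in; everything else is a routine application of the chain and product rules.
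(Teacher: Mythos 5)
Your proposal is correct and follows essentially the same route as the paper's proof: twice-differentiating the determinantal identity to obtain the four-term relation, collapsing $\partial_a\partial_b\Delta$ onto $\prod_{j>1}\lambda_j\,\partial_a\partial_b\lambda$ via the vanishing of $\lambda$ and of all its first derivatives at $(\bs{0};\xs^\star)$, and discarding the purely parametric term because $\lambda(\bs{0};\xs)\equiv 0$. The only cosmetic difference is that you package the subdominant eigenvalues into a single factor $R$ rather than writing out the full product-rule expansion over all $\lambda_i$ as the paper does.
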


\begin{proof}
The proof proceeds like the one above. Taking the second mixed total derivatives with respect to $x_\mu$ and $x_\mu'$ we obtain
\begin{align}
  0 & = \frac{d^2 }{d x_\mu d x_{\mu'} } \Delta(\bs{x}^\star - \bs{x};\bs{x}) \\
  & = \left[ \frac{\partial^2  \Delta}{\partial q_\mu \partial q_{\mu'} }
  -  \frac{\partial^2  \Delta}{\partial q_\mu \partial x_{\mu'} } 
  -  \frac{\partial^2  \Delta}{\partial x_\mu \partial q_{\mu'} } 
+  \frac{\partial^2  \Delta}{\partial x_\mu \partial x_{\mu'} } \right](\bs{x}^\star - \bs{x};\bs{x}). \label{eq:term}
\end{align}
Letting $z$ be either $x$ or $q$, we have 
\begin{align}
\frac{\partial^2  \Delta}{\partial z_\mu \partial z_{\mu'}} = 
\sum_i \frac{\partial^2  \lambda_i}{\partial z_\mu  \partial z_{\mu'}} \prod_{j\neq i} \lambda_j 
+  \sum_{\substack{i,j\\ i \neq j}} \frac{\partial  \lambda_i }{\partial z_\mu  }\frac{\partial  \lambda_i }{\partial z_{\mu'}  } \prod_{k \neq i,j} \lambda_k. 
\label{eq:secondpartial}
\end{align}
We now evaluate at $\xs = \xs^\star$. By Eq.~\eqref{eq:detzero} we realize that the last term in Eq.\,(\ref{eq:term}) vanishes. For all other terms $(z_\mu,z_{\mu'}) = (q_\mu,q_{\mu'}), (q_\mu,x_{\mu'}), (x_\mu,q_{\mu'})$, Eq.~\eqref{eq:secondpartial} yields
\begin{align}
\frac{\partial^2  \Delta}{\partial z_\mu \partial z_{\mu'}}(\bs{0};\bs{x}^\star) = 
\frac{\partial^2  \lambda}{\partial z_\mu  \partial z_{\mu'}} \prod_{j > 1} \lambda_j (\bs{0};\bs{x}^\star)
\end{align}
where we used that at stalling,  $\bs{x} = \bs{x}^\star$, both the dominant eigenvalue $\lambda(0;\xs)$ and the current $\partial_{q_\mu}\lambda(0;\xs)$ vanish.
It then follows from Eq.\,(\ref{eq:term})
\begin{align}
\left[ \frac{\partial^2  \lambda}{\partial q_\mu  \partial q_{\mu'}} - \frac{\partial^2  \lambda}{\partial q_\mu  \partial x_{\mu'}} - \frac{\partial^2  \lambda}{\partial x_\mu  \partial q_{\mu'}} \right] (\bs{0};\bs{x}^\star) = 0
\end{align}
which is the fluctuation-response relation Eq.~\eqref{eq:edge-stalling-response-app}.

\end{proof}

\begin{theorem}
  \label{theo:main-result}
  Consider a phenomenological current 
  \begin{align*}
    \jmath_\alpha = \sum_{\edge_\mu \in \edges_\alpha} \inc^\alpha_{\edge_\mu} \jmath_{\edge_\mu}, 
  \end{align*}
and a parametrization $x_\mu = x_\mu(h^\alpha) := \inc^\alpha_{\edge_\mu}h_\alpha$ consistent with the condition of detailed balance. Then at values of $h^\alpha$ where all of the edge currents vanish, the fluctuation-dissipation relation for the phenomenological currents holds:
\begin{align}\label{eq:abc}
2\diffu_{\alpha,\alpha'}(\xs^\star) = \partial_{h_\alpha'}J_\alpha(\xs^\star) + \partial_{h_\alpha}J_{\alpha'}(\xs^\star). 
\end{align}
\end{theorem}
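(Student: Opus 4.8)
The plan is to reduce the phenomenological statement Eq.~\eqref{eq:abc} entirely to the edge-level relation of the second corollary, Eq.~\eqref{eq:edge-stalling-response-app}, by combining the bilinearity of the first two current cumulants with the chain rule that local detailed balance induces between the abstract single-edge parameters $x_\mu$ and the physical forces $h_\alpha,h_{\alpha'}$. First I would expand the phenomenological diffusivity into its edge constituents. Because $\jmath_\alpha=\sum_{\edge_\mu\in\edges_\alpha}\inc^\alpha_{\edge_\mu}\jmath_{\edge_\mu}$ is a linear combination of edge currents, the correlation integral defining the diffusivity in Eq.~\eqref{eq:diffusivity} is bilinear, so that
\begin{align}
\diffu_{\alpha,\alpha'}=\sum_{\mu,\nu}\inc^\alpha_{\edge_\mu}\,\inc^{\alpha'}_{\edge_\nu}\,\diffu_{\edge_\mu,\edge_\nu},
\end{align}
with the sums restricted to $\edge_\mu\in\edges_\alpha$ and $\edge_\nu\in\edges_{\alpha'}$.

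Second, I would translate the phenomenological responses into edge responses. Local detailed balance together with locality fixes the parametrization through $\mot_{\edge_\mu}=\inc^\alpha_{\edge_\mu}h_\alpha+\inc^{\alpha'}_{\edge_\mu}h_{\alpha'}+\mathrm{const}$, where the increments vanish off the respective supports; since the edge parameter obeys $\partial_{x_\mu}\mot_{\edge_\mu}=1$, this yields $\partial_{h_\alpha}x_\mu=\inc^\alpha_{\edge_\mu}$ and hence the chain rule $\partial_{h_\alpha}=\sum_\mu\inc^\alpha_{\edge_\mu}\partial_{x_\mu}$. Applying it to $J_\alpha=\sum_\mu\inc^\alpha_{\edge_\mu}J_{\edge_\mu}$ and analogously to $J_{\alpha'}$ gives
\begin{align}
\partial_{h_{\alpha'}}J_\alpha+\partial_{h_\alpha}J_{\alpha'}=\sum_{\mu,\nu}\inc^\alpha_{\edge_\mu}\,\inc^{\alpha'}_{\edge_\nu}\left(\partial_{x_\nu}J_{\edge_\mu}+\partial_{x_\mu}J_{\edge_\nu}\right).
\end{align}

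Third, I would invoke the hypotheses to close the argument at stalling. Internal stalling guarantees that tuning the force so that $\jmath_\alpha$ stalls sends every contributing edge current to zero simultaneously, i.e.\ lands the system at the edge stalling point $\xs^\star$ furnished by Theorem~1 and Corollary~1, where Eq.~\eqref{eq:edge-stalling-response-app} applies. There the bracket in the last display equals $2\diffu_{\edge_\mu,\edge_\nu}(\xs^\star)$, and comparing with the bilinear expansion of $\diffu_{\alpha,\alpha'}$ immediately recovers Eq.~\eqref{eq:abc}; the diagonal case $\alpha=\alpha'$ is then the claimed FDR Eq.~\eqref{eq:main-result-pheno}.

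I expect the main obstacle to be the bookkeeping of the second step rather than any genuine estimate. One must carefully justify that the abstract single-edge parameters $x_\mu$ of Theorem~1 --- each controlling only its own edge with unit motance slope --- can be re-expressed consistently through the physical forces via $\partial_{h_\alpha}x_\mu=\inc^\alpha_{\edge_\mu}$, and that the edge partial derivatives commute correctly with the linear maps from edge to physical currents. This is exactly where locality, $\partial_{h_\alpha}\w_{\pm\edge}=0$ for $\edge\notin\edges_\alpha$, is indispensable: without it the force would also deform rates off the support of $\jmath_\alpha$, the chain-rule sum would fail to close on $\edges_\alpha$, and the symmetrization that turns the edge responses into the pure edge diffusivities would leave surviving nondissipative terms, spoiling the equilibrium form.
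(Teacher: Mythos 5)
Your proposal is correct and follows essentially the same route as the paper's proof: bilinearity of the diffusivity in the edge currents, the chain rule $\partial_{h_\alpha}x_\mu=\inc^\alpha_{\edge_\mu}$ induced by the LDB parametrization, and the edge-level relation Eq.~\eqref{eq:edge-stalling-response-app} at the internal stalling point $\xs^\star$; you merely read the computation from the response side toward the diffusivity rather than the other way around. Your closing remark on where locality enters matches the paper's own discussion.
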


\begin{proof}
The cumulants are multi-linear, therefore 
\begin{align}
    \diffu_{\alpha,\alpha'}(\xs^\star) = \sum_{\mu,\nu} \inc^\alpha_{\edge_\mu}\inc^{\alpha'}_{\edge_\nu} \diffu_{\edge_\mu,\edge_\nu}(\xs^\star).
\end{align}
  Using Eq.~\eqref{eq:determinant-jarzynski-app} and replacing  $\inc^\alpha_{\edge_\mu}=\partial_{h_\alpha} x_\mu$, we find
   \begin{align*}
    \diffu_{\alpha,\alpha'} & = \sum_{\mu,\nu} \inc^\alpha_{\edge_\mu}\inc^{\alpha'}_{\edge_\nu} \diffu_{\edge_\mu,\edge_\nu}(\xs^\star) \\
    &= \frac{1}{2}\sum_{\mu,\nu} \left[\inc^{\alpha'}_{\edge_\nu} \partial_{\x_\nu}  \inc^\alpha_{\edge_\mu} J_{\edge_\mu}(\xs^\star) +   \inc^\alpha_{\edge_\mu} \partial_{\x_\mu} \inc^{\alpha'}_{\edge_\nu} J_{\edge_\nu}(\xs^\star) \right ]\\
    &=  \frac{1}{2} \sum_\nu \left[ (\partial_{h_\alpha} \x_\nu)  (\partial_{\x_\nu}  J_{\alpha'})  + (\partial_{h_{\alpha'}} \x_\nu)  (\partial_{\x_\nu}  J_{\alpha}) \right]\\
    &= \frac{1}{2}\left(\partial_{h_\alpha}J_{\alpha'} + \partial_{h_\alpha'}J_{\alpha} \right).
  \end{align*}
\end{proof}
Notice that this equation is a generalization of the mixed fluctuation-response relation Eq.~\eqref{eq:edge-stalling-response-app} to physical currents.

\subsection{Transition rates and numerical values of the double quantum dot model, Figs. 1(b,d) and 2}
For completeness, we give the transition rates for the double quantum dot  described in Figs.~1(b,d) and Fig.~2, \cf Ref.~\cite{Strasberg.etal2013} for the details.
The double quantum dot can be in four states corresponding to the presence or absence of electrons in either dot.
Charging one a dot while an electron is present in the other dot requires an additional amount of energy $U$ due to Coulomb repulsion.
The charging and discharging statistics of the double quantum dot are governed by the Fermi--Dirac statistics.

For a transition involving a reservoir $\text{Y} \in \left\{\mathrm{L},\mathrm{R},\mathrm{D}\right\}$ we define the Fermi function $f_\text{Y}(x) = (1+\exp(\beta_\text{Y} x))^{-1}$, which are evaluated at the transition energies $ x = E-\mu_{\text{Y}}$, where $E$ is the energy of the electronic system.
The kinetic factors for the charging transitions may distinguish whether the other dot is empty ($\Gamma_\text{Y}$) or occupied ($\Gamma_\text{Y}^U$).
Denoting the horizontal edges in Fig.~1(d) from top to bottom by $\edge_3,\dots\edge_6$ with an orientation that is aligned with edges $\edge_1$ and $\edge_2$ we have the charging transition rates
\begin{align*}
w_{-\edge_1} &= \Gamma^U_\mathrm{D}(f_\mathrm{D}(\eps_\mathrm{D}  + U - \mu_\mathrm{D}),
&w_{+\edge_2} &= \Gamma_\mathrm{D}(f_\mathrm{D}(\eps_\mathrm{D} - \mu_\mathrm{D})),\\
w_{+\edge_3} &= \Gamma^U_\mathrm{L}(f_\mathrm{L}(\eps_\mathrm{L} + U - \mu_\mathrm{L}),
&w_{+\edge_4} &= \Gamma^U_\mathrm{R}(f_\mathrm{R}(\eps_\mathrm{R} + U - \mu_\mathrm{R})),\\
w_{-\edge_5} &= \Gamma_\mathrm{L}(f_\mathrm{L}(\eps_\mathrm{L}  - \mu_\mathrm{L}),
&w_{-\edge_6} &= \Gamma_\mathrm{R}(f_\mathrm{R}(\eps_\mathrm{R}  - \mu_\mathrm{R})).  
\end{align*}
The corresponding reversed rates, which correspond to a discharge, have the same rates replacing $f_\text{Y}(x)\to1-f_\text{Y}(x)$.
Note that the motance, \ie the ratio of a charging and discharging transition reads 
\begin{align*}
  \ln\frac{f_\text{Y}(x)}{(1-f_\text{Y}(x))} = -\beta_\text{Y} x. 
\end{align*}
The numerical values used for Fig.~2 are: 
\begin{align*}
\Gamma_\mathrm{D} &= \Gamma^U_\mathrm{D} = \Gamma_\mathrm{R} = \Gamma^U_\mathrm{L} = 0.5, \qquad
\Gamma_\mathrm{L} = \Gamma^U_\mathrm{R} = 1.5,\\
\mu_\mathrm{D} &= 1,\quad
\mu_\mathrm{L} = 1+V/2,\quad
\mu_\mathrm{R} = 1-V/2,\quad\\
\beta_\mathrm{R} &= \beta_\mathrm{L} = \beta = 1,\quad
\eps_\mathrm{D}=\eps_\mathrm{S} = 1\quad
\text{ and }U = 0.5.
\end{align*}
The voltage difference $V$ and the ``demon'' inverse temperature $\beta_\text{D}$ are kept variable, \cf Fig.~2.

\end{document}